\definecolor{yellow}{rgb}{.90,.95,1}
\newtheorem{mydef}{Definition}
\newtheorem{assumption}[mydef]{Assumption}
\newtheorem{mytheorem}{Theorem}
\newtheorem{mylemma}[mytheorem]{Lemma}
\newtheorem{myremark}{Remark}
\newcounter{ale}
\newenvironment{liste}{\begin{itemize}}{\end{itemize}}
\newcommand{\aliste}{\begin{liste} \setcounter{ale}{1}}
\newcommand{\zliste}{\end{liste}}
\title{\Large \bf  Controllability and Fraction of Leaders in Infinite Networks }
\author{C. Enyioha, M. A. Rahimian, G. J. Pappas and A. Jadbabaie{\small$~^{\dagger}$}
\thanks{$^{\dagger}$ All authors are with the Department of Electrical and Systems Engineering, University of Pennsylvania, Philadelphia, PA 19104-6228 USA (email: {\fontsize{8}{8}\selectfont\ttfamily\upshape \{cenyioha, mohar, pappasg jadbabai\}@seas.upenn.edu}).} 
\thanks{This work was supported in part by TerraSwarm, one of six centers of STARnet, a Semiconductor Research Corporation program sponsored by MARCO and DARPA, and in part by AFOSR Complex Networks Program.}
}
\begin{document}
\maketitle

\begin{abstract} 
In this paper, we study controllability of a network of linear single-integrator agents when the network size goes to infinity. We first investigate the effect of increasing size by injecting an input at every node and requiring that network controllability Gramian remain well-conditioned with the increasing dimension. We provide theoretical justification to the intuition that high degree nodes pose a challenge to network controllability. In particular, the controllability Gramian for the networks with bounded maximum degrees is shown to remain well-conditioned even as the network size goes to infinity. In the canonical cases of star, chain and ring networks, we also provide closed-form expressions which bound the condition number of the controllability Gramian in terms of the network size. We next consider the effect of the choice and number of leader nodes by actuating only a subset of nodes and considering the least eigenvalue of the Gramian as the network size increases. Accordingly, while a directed star topology can never be made controllable for all sizes by injecting an input just at a fraction $f<1$ of nodes; for path or cycle networks, the designer can actuate a non-zero fraction of nodes and spread them throughout the network in such way that the least eigenvalue of the Gramians remain bounded away from zero with the increasing size. The results offer interesting insights on the challenges of control in large networks and with high-degree nodes.
\end{abstract}

\section{Introduction \& Background}

The literature on the control of networks is vast and continues to attract much attention amongst diverse communities ranging from controls and theoretical physics to biology and applied sciences. In \cite{lombardi2007controllability} for instance, an interpretation of the controllability matrix is presented and applied to networks in biology for monitoring protein concentrations; while in \cite{gu2014controllability}, controllability of Brain networks is investigated.

In the control community as well, Pasqualetti \textit{et al.} in \cite{pasqualetti2013controllability} study the problem of controlling complex networks and quantify the difficulty of the control problem as a function of the minimum energy control. There, they also derived bounds to analyze the trade-off between control energy and number of driver nodes. Whereas earlier works started by \cite{TannerDistinctEigvOrthEigv} and later carried through by Mesbahi, Egerstedt and their collaborators \cite{rahmani2009controllability,martini2010controllability} have been focused on Laplacian dynamics, where leader nodes update their state values based on exogenous inputs and non-leader nodes update their states according to their relative states with their neighbors. Existing literature on controllability of networks has mostly focused on undirected networks.

In this paper, we consider the problem of controllability for a directed or undirected network of linear single-integrator agents and investigate the core challenges of control as network size increases. To begin, we assume that each agent is injected with an exogenous control signal and there our primary contribution is in bounding the condition number of the controllability Gramian in terms of the singular values of the network matrix, such that the Gramian remains numerically stable with the increasing dimension. In particular, we show that in structures with a bounded maximum degree the controllability Gramian remains well-conditioned even as the network size increases. Controllability of large networks and the interplay between structure and degree distribution has been a focus of recent studies \cite{Miao20086225,correlations}. Our results supplement the existing literature by providing the Gramian condition number as a metric to test controllability with the increasing network size; hence, highlighting the challenges posed by the high degree nodes on the network controllability. We next shift attention to the choice of leaders, i.e. exogenously actuated nodes, in the canonical cases of star, path and cycle networks and point out their main difference with respect to the spectral radius of the controllability Gramian inverse. In particular, while the star network can never be made controllable for all sizes just by selecting a fraction $f<1$ of nodes as leaders, in cases of the path and ring networks, one can select a non-zero fraction of nodes and spread them across the network to maintain controllability with the increasing size. 

The rest of this paper is organized as follows. The model and problem formulation are presented in Section \ref{sec:pre}. In Section~\ref{sec:main-result}, we present our main result on the numerical stability of the Gramian with the increasing dimension, and follow up with illustrations on canonical networks. In section \ref{sec:ratio-of-leaders}, we investigate the effect of the ratio and location of designated leader nodes on the controllability properties of star, path and cycle networks and with the increasing sizes. Concluding remarks are provided in Section~\ref{sec:conc}. 

\section{Preliminaries}\label{sec:pre}

\subsection{Network Information Flow Graph}

Throughout the paper, $\mathbb{R}$ is the set of all real numbers, $\mathbb{N}$ is the set of all natural numbers, $N\in\mathbb{N}$ denotes the network size, and $\mathcal{N} = \{1,\ldots,N\}$. Matrices are represented by capital letters, vectors are expressed by boldface lower-case letters, and the superscript $^{T}$ indicates the matrix transpose. Moreover, for a matrix $D$, $\left[D\right]_{ij}$ indicates the element of $D$ which is located at its $i-$th row and $j-$th column, and D is symmetric if $D = D^{T}$. We denote as $\mathcal{G} = (\mathcal{N},\mathcal{E})$ a (directed or undirected) graph comprising $N$ nodes labeled by $\mathcal{N}$, and $\mathcal{E} \subset \mathcal{N}\times \mathcal{N}$ the set of edges of $\mathcal{G}$. Agents $i$ and $j$ are called neighbors if $(i,j)\in \mathcal{E}$, graphs are used to capture the network information flow structure and we say that $(i,j)$ is an edge from $i$ to $j$, and represent it by an arrow starting from $i$ and ending at $j$. Given $\mathcal{G}$, we denote the network (weighted adjacency) matrix of the graph $\mathcal{G}$ by $A \in \mathbb{R}^{N \times N}$, where the entries of $A$ are such that $[A]_{ji} = 0$ if edge $(i,j)\not\in \mathcal{E}$. $A$ is a symmetric matrix iff the graph is undirected (symmetric). The eigenvalues of the matrix $A$ are denoted by $\lambda_1(A) \geq \lambda_2(A) \geq \hdots \geq \lambda_N(A)$, and its the singular values are denoted by $\sigma_1(A) \geq \sigma_2(A) \geq \hdots \geq  \sigma_N(A)$ and given as $\{\sigma^2_i(A), i \in \mathcal{N}\} = \{\lambda_i(AA^{T}), i \in \mathcal{N}\}$. An infinite network is a network $\mathcal{G} = (\mathcal{N},\mathcal{E})$, in which $\mathcal{N}$ is countably infinite so that $\mathcal{N} \leftrightarrow \mathbb{N}$. A locally $M$-bounded network is a network $\mathcal{G} = (\mathcal{N},\mathcal{E})$ together with its associated matrix $A$, satisfying $\forall j, \sum_{i\in \mathcal{N}} |a_{ji}| < M$, and $\forall i, \sum_{j\in \mathcal{N}} |a_{ij}| < M$, where $M < \infty$ is a bounded constant.

\subsection{The Model}

We consider a network of $N$ single integrator agents, which are labeled from $1$ to $N$ and whose interaction structure is expressed by the graph $\mathcal{G}$. We assume discrete-time dynamics in the interaction of the networked agents and let $x_i$, $i \in \mathcal{N}$ represent the scalar state of agent $i$ such that the temporal evolution of the agents after a fixed initial time $t_0 \in \mathbb{N}$ is given by: 
\begin{equation}\label{eq:dynamics}
\mathbf{x}(t+1) = A{\mathbf{x}}(t) + B{\mathbf{u}}(t), \ \ t > t_0, t \in \mathbb{N},
\end{equation} where $A$ is the network (or adjacency) matrix describing the interaction links between agents, $\mathbf{x}(t) = [x_1(t),x_2(t),\ldots,x_N(t)]^T$ is the state vector of the nodes, $B = I$ is the input matrix and ${\mathbf{u}}(t) \in \mathbb{R}^n$ is an exogenous control input signal injected at each node in the network.  We make the following assumptions in our modeling.

\begin{assumption}\label{assump:Stable_A}
The network matrix $A$ is Schur stable; that is, all its eigenvalues are strictly inside the unit circle.
\end{assumption}

\begin{assumption}\label{assump:b-equal-i}
The input matrix $B$ is an $N \times N$ diagonal matrix, whose diagonal entries consist only of $0$ and $1$.
\end{assumption}

\begin{myremark}\label{rem:structureOfB} Notably, Assumption \ref{assump:b-equal-i} is significant in that any such choice of matrix $B$ indicates a particular selection of leader nodes, which are those nodes to which a designer has access and can feed them with control signals. The diagonal structure of $B$ further implies that the leader nodes are driven independently of each other. In particular setting $B = I$, to imply that the exogenous control input signals are injected at each node in the network, allows us to investigate the controllability properties of the network as reflected through the solution of \eqref{eq:L} and solely determined by the network matrix $A$. As we shall see, this feature plays a key role in helping us characterize the influence of network size and maximum degree on controllability, and is distinct from much of the existing literature where the notion of driver nodes are typically considered \cite{pasqualetti2013controllability,liu2011controllability,rahimian2013structural}. \end{myremark}

\begin{assumption}\label{assump:localboundedness}
The network matrix $A$ is locally $M$-bounded.
\end{assumption}

\begin{myremark}\label{rem:stab} It is worth highlighting that stability and controllability properties of $A$ differ in the sense that while stability of $A$ is sensitive to perturbations in the network matrix $A$, controllability is not. Rather controllability is sensitive to structural changes. As such, if the given network matrix $A$ in \eqref{eq:dynamics} is not stable, it is possible to shift its eigenvalues to make it Schur stable, by scaling its entries so that they lie within the unit circle and without affecting its controllability property. We shall make use of this feature when considering a family of networks with a particular structure but of varying sizes, as we can ensure that all network matrices are Schur stable by uniformly scaling all members of the family by some large enough constant $\gamma$. \end{myremark}

\subsection{Network Controllability Gramian, its Condition Number, and Relation to Minimum Energy Control}

The networked system in \eqref{eq:dynamics} is controllable if any state $\mathbf{x}(t_0)$ can be steered to the zero state $\mathbf{0} = \mathbf{x}(t_1)$, for some finite $t_1 > t_0$ and using an appropriate input signal $\mathbf{u}(t), t_0 \leq t \leq t_1$. This  controllability condition for a Schur stable matrix $A$ is equivalent to requiring that the solution to the discrete Lyapunov equation 
\begin{equation}\label{eq:L}
AG_cA^T - G_c = -BB^{T}
\end{equation} is  invertible. The controllability Gramian is the symmetric positive semi-definite matrix $G_c$ that uniquely satisfies \eqref{eq:L} and is given by \cite[Chapter 6]{Chen:1998:LST:521603},
\begin{equation}\label{eq:G}
G_c = \lim_{t\to\infty}G(t)\mbox{, where, } G(t) = \sum_{\tau=0}^t \ A^\tau {B} {B^T} (A^T)^\tau.
\end{equation} The controllability condition is equivalent to positive-definiteness of $G_c$. The difficulty of control can be quantified by the minimum amount of energy required to reach a state $\mathbf{x}(t) = \mathbf{x}_{des}$ from $\mathbf{x}(0) = \mathbf{0}$, which is equal to $\mathbf{x}_{des}^{T}G^{-1}(t)\mathbf{x}_{des}$ and can be achieved through the least norm input $\mathbf{u}(\tau) =  \mathbf{u}^{*}(\tau)$ given by $\mathbf{u}^{*}(\tau) = B^{T}(A^{T})^{t-1-\tau}G^{-1}(t-1)\mathbf{x}_{des}$ for all $\tau \in [t-1]$. However, for $A$ Schur stable per Assumption~\ref{assump:Stable_A}, $G(t)$ converges to $G_c$, exponentially fast and for sufficiently large $t$, the two matrices can be made arbitrarily close. In particular, if $G_c$ is nearly singular, then large energy inputs are required to reach those states $\mathbf{x}_{des}$ belonging to the eigenspace of its least eigenvalue $\lambda_N(G_c)$. This motivates the use of the minimum eigenvalue of the controllability Gramian in \cite{pasqualetti2013controllability}, and we adopt the same measure of the worst case control effort when investigating the role of the choice and fraction of leader nodes in Section~\ref{sec:ratio-of-leaders}.

Moreover, when investigating the problem of network controllability with the increasing size, it becomes crucial for large $N$ that computations of $G^{-1}(t)$ for minimum energy control remain numerically stable; that is, for $G^{-1}(t)$ to be well-conditioned as the dimension $N$ increases \cite[Chapter III]{trefethen1997numerical}. To this end, we require that the Gramian condition number, $\kappa(G_c) \triangleq {\sigma_{\text{max}}(G_c)}/{\sigma_{\text{min}}(G_c)}$, with $\sigma_{\text{max}}(G_c)$ and $\sigma_{\text{min}}(G_c)$ being the maximal and minimal singular values of $G_c$, remain bounded uniformly in $N$. This is especially important when we guarantee that ${\sigma_{\text{min}}(G_c)}$ is bounded away from zero by taking $B=I$, as then even though the network is controllable for any finite $N$, for certain networks as $N\rightarrow \infty$ the Gramin condition number grows unbounded. Examples of such networks are star and complete networks, as shown in Fig. \ref{fig:star+complete}. 
\begin{figure}[h*]
\centering
\includegraphics[width=0.9\linewidth]{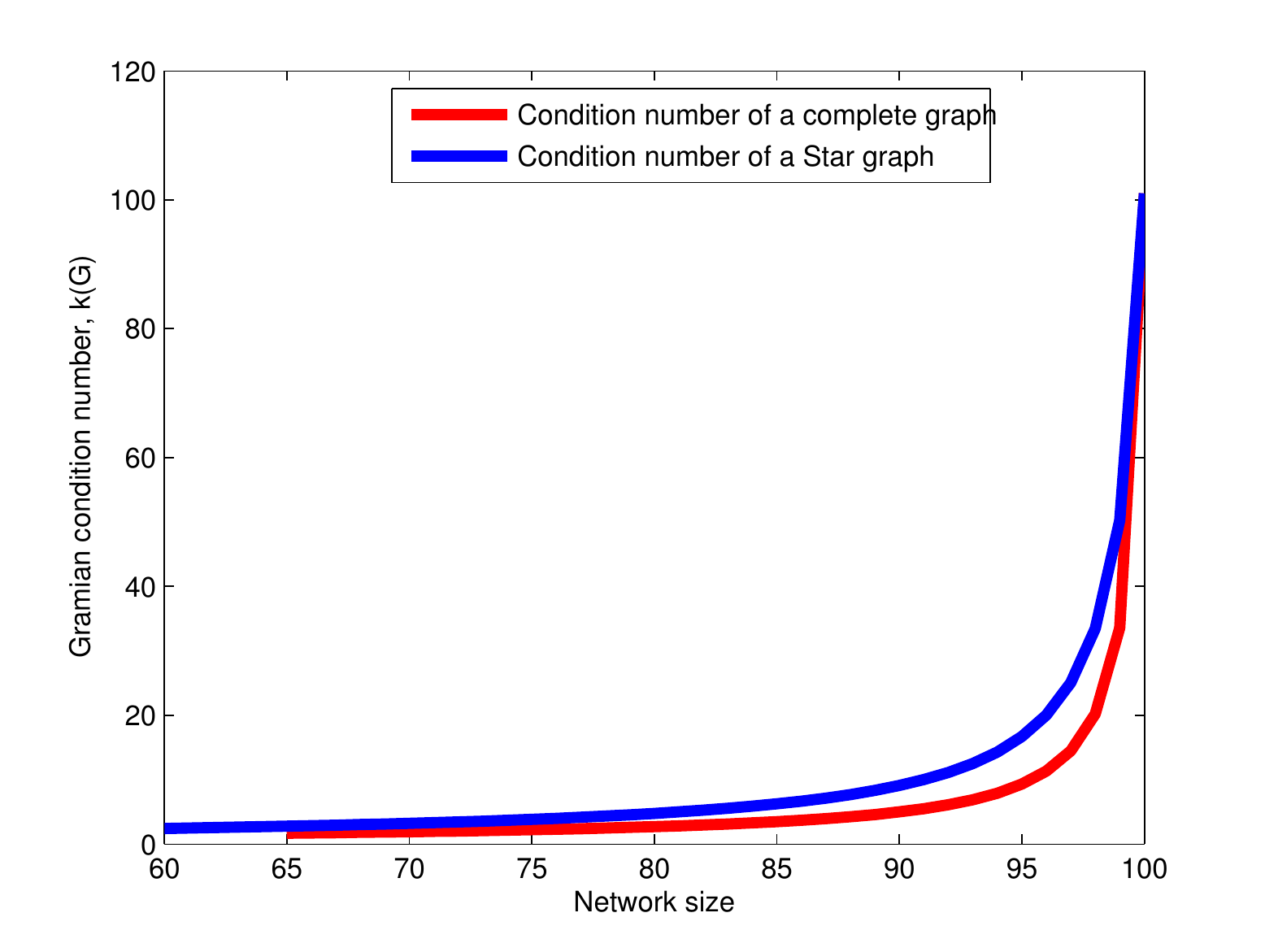}
\caption{The plot above depicts how the condition numbers $\kappa(G_c)$ of undirected star and complete networks grow unbounded as the network size $N\rightarrow \infty$.}
\vspace*{-9pt}
\label{fig:star+complete}
\end{figure}
Clearly, the controllability Gramian for certain networks becomes ill-conditioned as the network size increases. The importance of condition number for the controllability Gramian and the network control problem is also highlighted in \cite{plos,PhysRevLett.110.208701} and our main results in the next section provide a sufficient characterization of the networks for which $\kappa(G_c)$ remains bounded as $N\rightarrow \infty$.

\section{Controllability \& Bounded Degrees}\label{sec:main-result}

As a key insight, in this section we characterize how the increasing local degrees in a network hinders its controllability property. The main result of this section provides a theoretical justification to this intuition, resulting in a sufficient condition for having a well-conditioned Gramian as network size increases. First, we state a lemma bounding the singular values of the Controllability Gramian which we use in the sequel. 

\begin{mylemma}[\cite{gajic2008lyapunov}]
Let the matrix $A$ in \eqref{eq:dynamics} be asymptotically stable such that the solution $G_c=G_c^T\succ 0$ to \eqref{eq:L} exists. Furthermore, let $\alpha_1 \geq \hdots \geq \alpha_N$ be the eigenvalues of $G_c$, 
$\beta_1 \geq \hdots \geq \beta_N$ be the eigenvalues of $B$,  $\text{Re}(\lambda_1) \geq \hdots \geq \text{Re}(\lambda_N)$ be the eigenvalues of $A$, and $1 > \sigma_1^2 \geq \hdots \geq \sigma^2_N$ be the eigenvalues of $AA^T$. Then, the eigenvalues of $G_c$ are upper and lower bounded by $\beta_i + \frac{\sigma^2_N \beta_N}{1-\sigma^2_N} \leq \alpha_i \leq \beta_i + \frac{\beta_1 \sigma^2_1}{1-\sigma^2_1}$, $\forall i \in \mathcal{N}$.
\label{lem:gramian_bounds}
\end{mylemma}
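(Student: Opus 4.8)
The plan is to combine the fixed-point form of the Gramian with Weyl's eigenvalue inequalities and a short self-referential bootstrap. First I would rewrite \eqref{eq:L} --- equivalently, peel off the $\tau=0$ term of the series \eqref{eq:G} --- as the identity $G_c = BB^{T} + A G_c A^{T}$. Set $Q := BB^{T}$, so that $\lambda_i(Q)=\beta_i$ (here $B$ is the diagonal $0/1$ matrix of Assumption~\ref{assump:b-equal-i}, hence $B=B^{T}=BB^{T}$ and ``the eigenvalues of $B$'' is unambiguous), and $R := A G_c A^{T}$. Both $Q$ and $R$ are symmetric positive semidefinite, so the task reduces to controlling the extreme eigenvalues of $R$ in terms of those of $G_c$ and the singular values of $A$.

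Next I would bound $\lambda_N(R)$ and $\lambda_1(R)$. Since $\alpha_N I \preceq G_c \preceq \alpha_1 I$, congruence by $A$ preserves the Loewner order and gives $\alpha_N A A^{T} \preceq R \preceq \alpha_1 A A^{T}$; combined with $\sigma_N^2 I \preceq A A^{T} \preceq \sigma_1^2 I$ (recall $\sigma_i^2 = \lambda_i(AA^{T})$) this yields $\sigma_N^2 \alpha_N \leq \lambda_N(R) \leq \lambda_1(R) \leq \sigma_1^2 \alpha_1$. Applying Weyl's inequalities $\lambda_i(Q) + \lambda_N(R) \leq \lambda_i(Q+R) \leq \lambda_i(Q) + \lambda_1(R)$ to $G_c = Q + R$ then produces, for every $i \in \mathcal{N}$, the chain $\beta_i + \sigma_N^2 \alpha_N \leq \alpha_i \leq \beta_i + \sigma_1^2 \alpha_1$.

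The key observation is that the two endpoints of this chain are self-referential and can be solved. Taking $i=1$ in the upper bound gives $\alpha_1 \leq \beta_1 + \sigma_1^2 \alpha_1$, and since $\sigma_1^2 < 1$ by hypothesis this rearranges to $\alpha_1 \leq \beta_1/(1-\sigma_1^2)$; taking $i=N$ in the lower bound gives $\alpha_N \geq \beta_N + \sigma_N^2 \alpha_N$, hence $\alpha_N \geq \beta_N/(1-\sigma_N^2)$. Substituting these two scalar estimates back into the general-$i$ chain gives exactly $\beta_i + \frac{\sigma_N^2 \beta_N}{1-\sigma_N^2} \leq \alpha_i \leq \beta_i + \frac{\beta_1 \sigma_1^2}{1-\sigma_1^2}$, which is the claim.

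Everything here --- the Loewner-order manipulations, Weyl's inequalities, and the algebra --- is routine; the point I would be most careful about, and the real heart of the argument, is that the inequality chain must first be ``closed'' at its two ends (exploiting $\sigma_1^2<1$ to solve for $\alpha_1$ and $\alpha_N$) before it can be propagated to all indices. It is also worth noting that the real parts of the eigenvalues of $A$, although listed in the hypotheses, do not enter the final bound --- only the singular values of $A$ do --- and that in the degenerate case $\sigma_N=0$ the lower bound reads $\alpha_i \geq \beta_i$, which follows at once from $R \succeq 0$ and Weyl with no division involved.
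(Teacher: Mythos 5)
Your proof is correct, and it is essentially a worked-out version of exactly the argument the paper invokes: the paper's ``proof'' is only a pointer to \cite[Theorem 3.1]{gajic2008lyapunov} together with the remark that the result follows from the Ostrowski-type inequalities for the eigenvalues of a sum of symmetric matrices (your Weyl step applied to $G_c = BB^{T} + AG_cA^{T}$) and of a matrix product (your Loewner-order congruence bound $\sigma_N^2\alpha_N \leq \lambda_i(AG_cA^{T}) \leq \sigma_1^2\alpha_1$). The self-referential bootstrap at $i=1$ and $i=N$, which uses $\sigma_1^2<1$ to solve for $\alpha_1$ and $\alpha_N$ before propagating to all indices, is the right way to close the argument and is handled correctly.
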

\begin{proof}
The proof is a direct consequence of the Ostrosky inequalities for the eigenvalue of a sum of symmetric matrices and for the eigenvalue of a matrix product. We refer readers to \cite[Theorem 3.1]{gajic2008lyapunov} for proof of Lemma \ref{lem:gramian_bounds}. Similar and related results are presented in  \cite{karanam1981lower} and \cite{mori1986explicit}.
\end{proof}

Observe that since $G_c = G_c^T$, its singular values and eigenvalues coincide and Lemma~\ref{lem:gramian_bounds} can be used to bound the condition number of controllability Gramian for the network model given by \eqref{eq:dynamics} under the Assumptions \ref{assump:Stable_A} to \ref{assump:localboundedness}.

\begin{mytheorem}\label{th:main-result} Given $B=I$, together with Assumptions 1 and 3 for the network model \eqref{eq:dynamics}, the condition number of the controllability Gramian $\kappa(G_c)$ is bounded in terms of the singular values of the network matrix $A$, as follows:
\begin{equation}\label{eq:cond_bound}
\kappa(G_c) = \frac{\sigma_1(G_c)}{\sigma_N(G_c)} \leq \frac{1-\sigma^2_N(A)}{1-\sigma^2_1(A)}.
\end{equation}
\end{mytheorem}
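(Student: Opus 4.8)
The plan is to obtain \eqref{eq:cond_bound} as an immediate specialization of Lemma~\ref{lem:gramian_bounds} to the case $B=I$. First I would observe that when $B=I$ every eigenvalue of $B$ equals one, so in the notation of the lemma $\beta_i=1$ for all $i\in\mathcal{N}$, and in particular $\beta_1=\beta_N=1$. Plugging this into the two-sided estimate of Lemma~\ref{lem:gramian_bounds} yields, for every eigenvalue $\alpha_i$ of $G_c$,
\[
1+\frac{\sigma_N^2(A)}{1-\sigma_N^2(A)}\;\le\;\alpha_i\;\le\;1+\frac{\sigma_1^2(A)}{1-\sigma_1^2(A)} .
\]
The next step is the elementary algebraic simplification $1+\frac{\sigma_N^2(A)}{1-\sigma_N^2(A)}=\frac{1}{1-\sigma_N^2(A)}$ and $1+\frac{\sigma_1^2(A)}{1-\sigma_1^2(A)}=\frac{1}{1-\sigma_1^2(A)}$, so that the whole spectrum of $G_c$ is confined to the interval $\big[\tfrac{1}{1-\sigma_N^2(A)},\,\tfrac{1}{1-\sigma_1^2(A)}\big]$.

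I would then invoke $G_c=G_c^T\succ0$, so that its singular values and eigenvalues coincide; thus $\sigma_1(G_c)=\alpha_1\le\frac{1}{1-\sigma_1^2(A)}$ and $\sigma_N(G_c)=\alpha_N\ge\frac{1}{1-\sigma_N^2(A)}$. Dividing the upper bound on $\sigma_1(G_c)$ by the lower bound on $\sigma_N(G_c)$ gives
\[
\kappa(G_c)=\frac{\sigma_1(G_c)}{\sigma_N(G_c)}\le\frac{1-\sigma_N^2(A)}{1-\sigma_1^2(A)},
\]
as claimed.

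The one point requiring care — and where Assumption~\ref{assump:localboundedness} enters — is ensuring the hypothesis $\sigma_1(A)<1$ of Lemma~\ref{lem:gramian_bounds}: Schur stability alone bounds only the spectral radius of $A$, not its operator norm, so the bound would otherwise be vacuous. I would handle this through the rescaling observation of Remark~\ref{rem:stab}, noting that local $M$-boundedness bounds the absolute row and column sums of $A$ and hence $\sigma_1^2(A)=\lambda_1(AA^T)=\|A\|_2^2\le\|A\|_1\|A\|_\infty<M^2$, so uniformly dividing $A$ by any fixed constant $\gamma>M$ pushes its largest singular value strictly below one across an entire family of networks of growing size without changing the controllability structure. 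Beyond this normalization bookkeeping I do not anticipate a real obstacle: the heart of the argument is simply the substitution $\beta_i\equiv1$ and the telescoping identity, and the resulting bound depends only on the extremal singular values of $A$, which is exactly the form needed in the subsequent subsections to conclude that $\kappa(G_c)$ stays bounded as $N\to\infty$ whenever the node degrees remain bounded.
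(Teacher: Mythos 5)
Your proposal is correct and follows essentially the same route as the paper: substitute $\beta_i\equiv 1$ into Lemma~\ref{lem:gramian_bounds}, simplify the two-sided bound to $\tfrac{1}{1-\sigma_N^2(A)}\le\alpha_i\le\tfrac{1}{1-\sigma_1^2(A)}$, and take the ratio of the extremal eigenvalues, which coincide with the singular values since $G_c=G_c^T$. Your added remark on why Assumption~\ref{assump:localboundedness} and the rescaling of Remark~\ref{rem:stab} are needed to guarantee $\sigma_1(A)<1$ is a useful clarification that the paper leaves implicit.
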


\begin{proof} 
First since $B=I$, $\beta_i = 1, \forall i$ and the spectral bounds of $G_c$ from Lemma \ref{lem:gramian_bounds} become
\begin{equation}\label{eq:gramian_bounds2}
\frac{1}{1-\sigma^2_N} \leq \alpha_i \leq \frac{1}{1-\sigma^2_1},\forall i \in \mathcal{N}.
\end{equation}
 We can now upper bound $\kappa(G_c)$ as in \eqref{eq:cond_bound}, noting that $\kappa(G_c)$ is trivially lower bounded by $1$.
\end{proof}

The bounds in \eqref{eq:cond_bound} are in terms of the singular values of the network matrix $A$, and the following result attributed to \textit{Schur} allows us to uniformly bound $\sigma_{i}^2(A),\forall i \in \mathcal{N}$ of an adjacency matrix $A$, provided that its maximum degree does not scale with the network size $N$.

\begin{mylemma}(Schur Bound \cite{schur}) \label{lem:schur}
Let $A$ be an $N\times N$ locally $M$-bounded network matrix; then its largest singular value satisfies $\sigma_{1}^2(A) \leq M^2$.
\end{mylemma}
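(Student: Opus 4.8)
The plan is to invoke the classical \emph{Schur test}, which bounds the spectral norm $\sigma_1(A)=\|A\|_2$ of any matrix by the geometric mean of its maximum absolute row sum and its maximum absolute column sum; once this is in place, local $M$-boundedness forces both of these sums to be at most $M$, and the claim $\sigma_1^2(A)=\|A\|_2^2\le M^2$ follows immediately.

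Concretely, I would fix an arbitrary $\mathbf{v}=(v_1,\dots,v_N)^T\in\mathbb{R}^N$ and estimate $\|A\mathbf{v}\|_2^2=\sum_i\bigl|\sum_j a_{ij}v_j\bigr|^2$. The key manipulation is to write each weight $|a_{ij}|$ as $|a_{ij}|^{1/2}\cdot|a_{ij}|^{1/2}$ and apply the Cauchy--Schwarz inequality in the index $j$, which yields $\bigl|\sum_j a_{ij}v_j\bigr|^2\le\bigl(\sum_j|a_{ij}|\bigr)\bigl(\sum_j|a_{ij}|\,|v_j|^2\bigr)$. Summing over $i$, bounding the first factor by $M$ uniformly via the row-sum bound, and then interchanging the order of the two (finite) summations gives $\|A\mathbf{v}\|_2^2\le M\sum_j|v_j|^2\sum_i|a_{ij}|$; a second application of the $M$-bound, this time to the column sum $\sum_i|a_{ij}|\le M$, produces $\|A\mathbf{v}\|_2^2\le M^2\|\mathbf{v}\|_2^2$. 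Since $\mathbf{v}$ was arbitrary, $\sigma_1(A)=\|A\|_2\le M$, and squaring finishes the proof.

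There is no serious obstacle here: the argument is a two-line application of Cauchy--Schwarz together with a rearrangement of a finite double sum. The only point that needs a moment's care is that \emph{both} the row and the column sums must be controlled — one invocation of the local $M$-bound is used for each — and that the splitting $|a_{ij}|=|a_{ij}|^{1/2}|a_{ij}|^{1/2}$ (rather than, say, pairing $1$ with $|a_{ij}v_j|$) is exactly what makes the two resulting factors come out as a row sum and a weighted column sum. As an alternative one could cite the interpolation identity $\|A\|_2\le\|A\|_1^{1/2}\|A\|_\infty^{1/2}$ directly, but writing out the Cauchy--Schwarz step keeps the argument self-contained and makes transparent why the hypothesis on both sums is needed.
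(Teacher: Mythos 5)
Your proof is correct and takes essentially the same route as the paper: both rest on the Schur test, which bounds $\sigma_1^2(A)$ by a product of a row sum and a column sum, each of which local $M$-boundedness caps at $M$. The only difference is that you derive the Schur test from scratch via the Cauchy--Schwarz splitting $|a_{ij}|=|a_{ij}|^{1/2}|a_{ij}|^{1/2}$, whereas the paper simply cites it from the literature in the slightly sharper form $\sigma_1^2(A)\le\max_{i,j:\,[A]_{ij}\neq 0}R_iC_j$.
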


\begin{proof} For all $i,j \in \mathcal{N}$, let $R_i = \sum_{k \in \mathcal{N}}|[A]_{ik}|$ and $C_j = \sum_{k \in \mathcal{N}} |[A]_{kj}|$. It follows by the Schur Bound \cite{schur,Golub:1996:MC:248979}, that $\sigma^2(A)$ $ \leq $ $\max_{i\in  \mathcal{N}} \sum_{j\in  \mathcal{N}} |[A]_{ij}| C_j$ $ \leq$ $ \max_{i,j \in \mathcal{N},[A]_{ij}\neq 0 } R_i C_j$, and by locally $M$-boundedness we get that $R_i<M$ and $C_j<M$, $\forall i,j\in \mathcal{N}$, so that the claimed bound follows. \end{proof}

Lemma \ref{lem:schur} implies that for locally bounded networks and after a proper normalization to ensure it is Schur stable, we can derive upper bounds for $\kappa(G_c)$ that does not scale with the network size $N$ and hence ensure controllability as $N \to \infty$. This leads us to our main result on controllability of locally bounded infinite networks.

\begin{mytheorem} \label{th:LocalBounded} Let $A$ be the network matrix corresponding to a locally $M$-bounded network and $\gamma > M$ constant. The condition number of the Gramian for a network, following the dynamics in \eqref{eq:dynamics} with network matrix $\frac{1}{\gamma}A$ and the input matrix $B=I$, is bounded uniformly in $N$, whence the Controllability Gramian is guaranteed to remain well-conditioned as $N \to \infty$. \end{mytheorem}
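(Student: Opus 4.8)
The plan is to chain together Lemma~\ref{lem:schur} (the Schur bound) and Theorem~\ref{th:main-result}. First I would verify that the scaled matrix $\frac{1}{\gamma}A$ satisfies the hypotheses needed to invoke Theorem~\ref{th:main-result}, namely Schur stability (Assumption~\ref{assump:Stable_A}) together with local boundedness (Assumption~\ref{assump:localboundedness}). Local $M$-boundedness of $A$ immediately yields local $(M/\gamma)$-boundedness of $\frac{1}{\gamma}A$, since every row and column sum is divided by $\gamma$; and because $\gamma > M$, the constant $M/\gamma$ is strictly less than $1$. Applying Lemma~\ref{lem:schur} to $\frac{1}{\gamma}A$ gives $\sigma_1^2\!\left(\frac{1}{\gamma}A\right) \leq (M/\gamma)^2 < 1$. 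Since the spectral radius is dominated by the largest singular value, $\rho\!\left(\frac{1}{\gamma}A\right) \leq M/\gamma < 1$, so $\frac{1}{\gamma}A$ is Schur stable and the controllability Gramian $G_c$ solving the associated discrete Lyapunov equation \eqref{eq:L} with $B=I$ exists and is positive definite.

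Next I would simply apply Theorem~\ref{th:main-result} to the pair $\left(\frac{1}{\gamma}A,\,I\right)$, obtaining $\kappa(G_c) \leq \dfrac{1-\sigma_N^2\!\left(\frac{1}{\gamma}A\right)}{1-\sigma_1^2\!\left(\frac{1}{\gamma}A\right)}$. Bounding the numerator above by $1$ (all singular values being nonnegative) and the denominator below by $1-(M/\gamma)^2$ (using the Schur bound once more), I get $\kappa(G_c) \leq \dfrac{1}{1-(M/\gamma)^2}$. The decisive observation is that this bound involves only $M$ and $\gamma$ and is completely independent of the network size $N$; hence it holds uniformly over any family of locally $M$-bounded networks of growing size, giving $\limsup_{N\to\infty}\kappa(G_c) \leq \frac{1}{1-(M/\gamma)^2} < \infty$, i.e.\ the Gramian stays well-conditioned as $N\to\infty$.

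There is no deep obstacle here: the statement is essentially a composition of the already-established results. The one point that warrants care is the uniformity claim — one must ensure that a single constant $M$ bounds the row and column sums of \emph{every} member of the family, so that one fixed choice $\gamma > M$ normalizes all of them simultaneously and the resulting estimate is genuinely size-free. It is also worth noting, in line with Remark~\ref{rem:stab}, that the normalization by $1/\gamma$ leaves the controllability structure of the network untouched, so each finite-$N$ network remains controllable while the numerical conditioning is simultaneously secured.
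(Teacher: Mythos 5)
Your proposal is correct and follows essentially the same route as the paper: apply the Schur bound (Lemma~\ref{lem:schur}) to the scaled matrix $\frac{1}{\gamma}A$ to get $\sigma_1^2 \leq (M/\gamma)^2 < 1$, then substitute into the condition-number bound of Theorem~\ref{th:main-result} to obtain the size-independent estimate $\kappa(G_c)\leq \gamma^2/(\gamma^2-M^2)$. Your explicit justification of Schur stability via the spectral radius being dominated by the largest singular value is a small point of added care that the paper simply asserts.
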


\begin{proof} It follows from Lemma~\ref{lem:schur} that the singular values of $\frac{1}{\gamma}A$ are bounded above by $\frac{M}{\gamma} <1$. Replacing the latter inequality in \eqref{eq:cond_bound} and noting that $\frac{1}{\gamma}A$ is Schur stable we get
\begin{align}
\kappa(G_c) \leq \frac{1-\sigma^2_N(A)}{1-\sigma^2_1(A)}  \leq \frac{\gamma^2}{\gamma^2-{M}^2} <  \infty. 
\label{eq:gramian_bounds}
\end{align}
\end{proof}

\begin{myremark}The result of Theorem \ref{th:LocalBounded} is to a great extent an artifact of our methodology. In particular, by taking $B=I$ the minimum eigenvalue $\alpha_N$ of $G_c$ is lower-bounded by one and away from zero per \eqref{eq:gramian_bounds2}. Indeed, setting $B=I$ and allowing for an input signal to be injected at every node of the network factors out the variety of structural and dynamical influences that affect the control behavior, whence singling out the effect of network size $N$. This in turn enables us to highlight the role of maximum degree, or more generally local boundedness, in controllability of large networks. Our result shows that though for each finite $N$ the network is controllable, as $N$ goes to infinity being locally-bounded is a sufficient condition for the controllability Gramian to remain well-conditioned. \end{myremark}

By considering the condition number of the controllability Gramian, we are able to use bounds on $\kappa(G_c)$ to investigate the effect of network size $N$, and the limiting behavior as $N\to \infty$. This idea is explored further in the next subsection, where we consider the cases of star, path and cycle networks and proffer closed form expressions for the upper-bound in \eqref{eq:cond_bound}.

\subsection{Bounds on Condition Numbers for Canonical Networks}

In this subsection, we illustrate our key result on some canonical graphs. For the cases considered, we compute bounds on $\kappa(G_c)$ and consider the limit as $N\rightarrow \infty$ of $\kappa(G_c)$. In each case, based on the premise of Theorem \ref{th:LocalBounded} and per Remark \ref{rem:stab}, we scale the $0-1$ adjacency matrices by a common constant $\gamma$ to ensure the Schur stability of $\frac{1}{\gamma}A$ for every network in the range of sizes considered.
 
\textbf{Undirected star} graphs on $N$ nodes have eigenvalues that are given by $\lambda_i = 0$, $\forall i \in\{2, \ldots, N-1 \}$, and $\lambda_N = - \sqrt{N-1}, \lambda_1 = + \sqrt{N-1}$ \cite{yuan2013exact}. Based on \eqref{eq:gramian_bounds} in the proof of Theorem \ref{th:LocalBounded}, we can bound $\kappa(G_c)$ as follows: $\kappa(G_c) \leq  \frac{\gamma^2}{\gamma^2 - (N-1)}$. First, we note that star networks do not satisfy the premise of Theorem \ref{th:LocalBounded}, since its maximum degree is not bounded as $N\rightarrow \infty$.  As shown in Fig. \ref{fig:star+complete}, a star network is a perfect archetype of networks that become uncontrollable since its maximum degree is unbounded as $N\rightarrow \infty$, causing the condition number of its associated Gramian, $\kappa(G_c)$ to grow unbounded as $N\rightarrow \infty$.
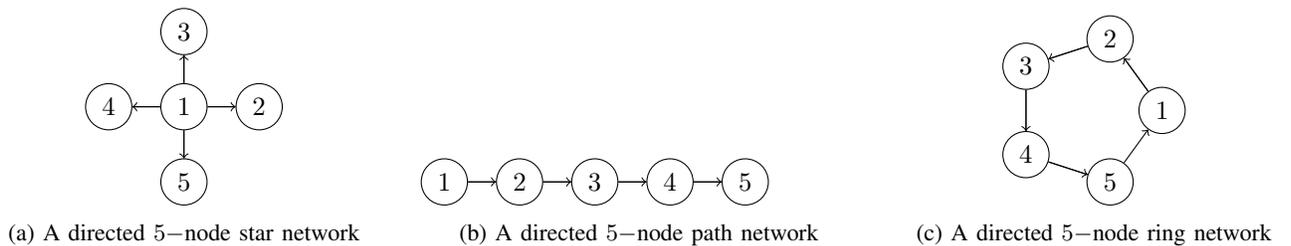
\begin{figure*}[hb]
        \centering
        \begin{subfigure}[b]{0.32\textwidth}
\centering
\begin{tikzpicture}
\tikzstyle{every node}=[draw,shape=circle];
\node (v0) at (0:0) {$1$};
\node (v1) at ( 0:1) {$2$};
\node (v2) at ( 90:1) {$3$};
\node (v3) at (2*90:1) {$4$};
\node (v4) at (3*90:1) {$5$};

\foreach \from/\to in {v0/v1, v0/v2, v0/v3, v0/v4}
\draw [->] (\from) -- (\to);

\draw
(v0) -- (v1)
(v0) -- (v2)
(v0) -- (v3)
(v0) -- (v4);

\end{tikzpicture}
\caption{A directed $5-$node star network}
\label{fig:dir-star}
\end{subfigure}~ 
\begin{subfigure}[b]{0.32\textwidth}
\begin{tikzpicture}
\tikzstyle{every node}=[draw,shape=circle];
\node (v0) at (0:0) {$1$};
\node (v1) at ( 0:1) {$2$};
\node (v2) at ( 0:2) {$3$};
\node (v3) at (0:3) {$4$};
\node (v4) at (0:4) {$5$};

\foreach \from/\to in {v0/v1, v1/v2, v2/v3, v3/v4}
\draw [->] (\from) -- (\to);

\draw
(v0) -- (v1)
(v1) -- (v2)
(v2) -- (v3)
(v3) -- (v4);

\end{tikzpicture}
\caption{A directed $5-$node path network}
\label{fig:dir-path-graph}
\end{subfigure}~ 
\begin{subfigure}[b]{0.32\textwidth}
\centering
\begin{tikzpicture}
\tikzstyle{every node}=[draw,shape=circle];
\node (v0) at ( 0:1) {$1$};
\node (v1) at ( 72:1) {$2$};
\node (v2) at (2*72:1) {$3$};
\node (v3) at (3*72:1) {$4$};
\node (v4) at (4*72:1) {$5$};

\foreach \from/\to in {v0/v1, v1/v2, v2/v3, v3/v4, v4/v0}
\draw [->] (\from) -- (\to);

\draw

(v0) -- (v1)
(v1) -- (v2)
(v2) -- (v3)
(v3) -- (v4);
(v4) -- (v0);

\end{tikzpicture}
\caption{A directed $5-$node ring network}
\label{fig:dir-ring-graph}
\end{subfigure}
\caption{Edge orientations of Directed Graphs considered}\label{fig:graphs-directed}
\end{figure*}

\vspace*{1\baselineskip} 

\textbf{Undirected path} graphs have a maximum degree of two that is constant, hence bounded, as the network size $N\rightarrow \infty$. The eigenvalues of an undirected path network with $N$ nodes are given by $\lambda_i = 2 \cos \left(\frac{i\pi}{N+1}\right)$, $\forall i \in\mathcal{N}$ \cite{yuan2013exact}. Hence, $\kappa(G_c)$ for a path network is upper bounded by 
\begin{equation}\label{pathbound}
\kappa(G_c) \leq \frac{1 - {\left(\frac{2 \cos \left(\frac{\lfloor N/2 \rfloor \pi}{N+1}\right)}{\gamma}\right)}^2}{1-{\left(\frac{2 \cos \left(\frac{\pi}{N+1}\right)}{\gamma}\right)}^2} = \frac{{\gamma}^{2} - 4 \cos^2 \left(\frac{\lfloor N/2 \rfloor \pi}{N+1}\right)}{{\gamma}^2- 4 \cos^2 \left(\frac{\pi}{N+1}\right)}.
\end{equation}
\begin{figure}[h*]
\centering
\includegraphics[width=0.9\linewidth]{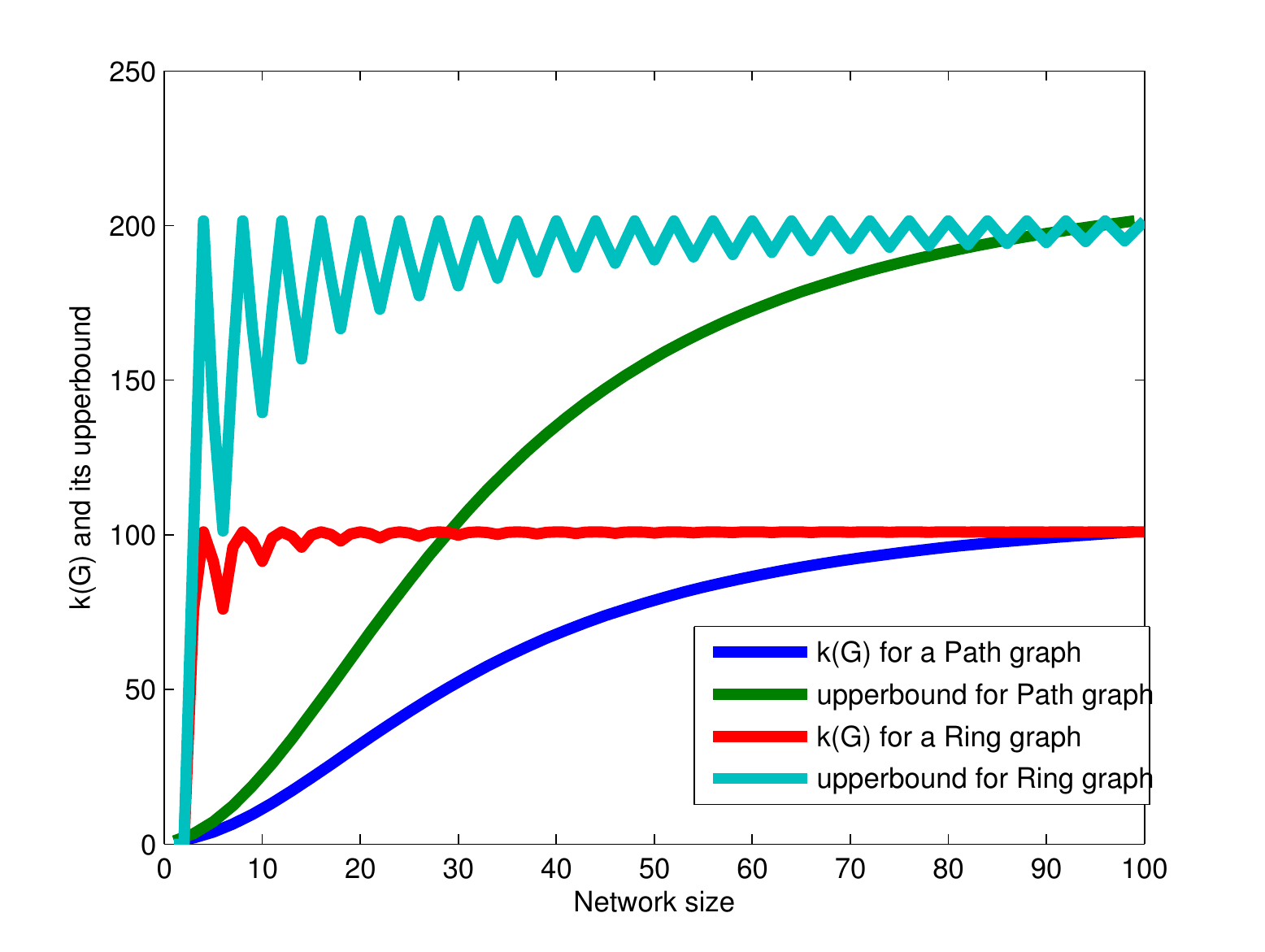}
\caption{Gramian condition numbers for ring and path graphs.}
\label{fig:ring+path-u}
\vspace*{-15pt}
\end{figure}
As $N \rightarrow \infty$, we can see that the upper bound of $\kappa(G_c)$ in undirected path graphs, as shown in Fig. \ref{fig:ring+path-u}, is bounded, and approaches its bound from below. Specifically, as $N \to \infty$, $\cos^2(\frac{\pi}{N+1}) \to 1$ and $\cos^2 \left(\frac{\lfloor N/2 \rfloor \pi}{N+1}\right) \to 0$, so that $\kappa(G_c)$ for a path network is essentially upper bounded by $\frac{\gamma^2}{\gamma^2 - 4}$.  

\textbf{Undirected ring} graphs remain locally bounded as the network size increases, similarly to undirected path graphs. The eigenvalues of a ring network of size $N$ is given by $\lambda_i = 2 \cos \left(\frac{2\pi (i-1)}{N}\right), i \in\{1, \ldots, N\}$; hence, the condition number of the controllability Grammian for a ring network is upper bounded as $\kappa(G_c) \leq \frac{\gamma^2 - 4 \cos^2 \left(\frac{\lfloor N/2 \rfloor \pi}{N}\right)}{{\gamma}^2- 4}$. The behavior of the upper bound on $\kappa(G_c)$ is similar to that of the Path graph. Shown in Fig. \ref{fig:ring+path-u}, the periodic spikes observed in the plot are due to the term $4 \cos^2 \left(\frac{\lfloor N/2 \rfloor \pi}{N}\right)$. In particular, for low values of $N$, the differences in the values of $\frac{\lfloor N/2 \rfloor \pi}{N}$ are higher; and as $N$ increases, the differences reduce, resulting in the evening out of the `saw-tooth' observed for low values of $N$; and as $N\rightarrow \infty$ the term $4 \cos^2(\frac{\lfloor N/2 \rfloor \pi}{N})$ approaches $0$. Observe that for both path and ring graphs we get the same asymptotic bound of $\frac{\gamma^2}{\gamma^2 - 4}$, also captured by Fig. \ref{fig:ring+path-u}; and indeed, it is to be expected that ring and path networks should behave increasingly similar to each other as $N \to \infty$.

\textbf{Undirected complete} graphs do not satisfy the premise of Theorem \ref{th:LocalBounded}. In particular, the eigenvalues of an undirected complete network are given by $\lambda_1 = N-1$, $\lambda_i =  -1$, $\forall i \in \{2, \ldots , N \}$. Hence, the condition number of the controllability Gramian for a complete network is upper bounded by $\kappa(G_c) \leq \frac{\gamma^2 - 1}{{\gamma}^2- (N-1)^2}$. Like star networks, complete networks are not locally bounded as $N \rightarrow \infty$. Hence, the sufficient conditions in Theorem \ref{th:LocalBounded} are not satisfied and as we observed in Fig. \ref{fig:star+complete},  $\kappa(G_c)$ for complete graphs grows unbounded with increasing network size.

\textbf{Directed star} networks have a constant condition number on the controllability Gramian, even though the bound on $\kappa(G_c)$ increases unbounded with as $N\rightarrow \infty$. This observation is intuitive, since an application of control input at the central node affects other nodes to control the network, implying that the network can be controlled with low energy. The squared singular values of a directed star networks with edge orientation as shown in Fig. \ref{fig:dir-star} are $\sigma_i^2= 0, \forall i\in \{1,\hdots, N-1\}$ and $\sigma_N^2 = N-1$. Substituting these into \eqref{eq:cond_bound}, we have that
$\kappa(G_c) \leq \frac{\gamma^2}{\gamma^2 - (N-1)^2}$, where for the range of values that $N$ takes, the sale factor $\gamma$ is such that it dominates the largest $N$, thence for the directed star the bound increases as $N$ increases. Numerical experiments indicate that the actual condition number of the Gramian associated with the directed star network is bounded, pointing out that locally-boundedness in Theorem~\ref{th:LocalBounded} is a sufficient but not necessary condition.

\textbf{Directed path} networks have maximum degree that is bounded as $N\rightarrow \infty$. For directed path graphs with edge orientation shown in Fig. \ref{fig:dir-path-graph}, the squared singular values are $\sigma_i^2= 1$, for $ i = 1,\hdots,N-1$ and $\sigma_N^2 = 0$, which yield an upper bound of $\kappa(G_c) \leq \frac{\gamma^2}{\gamma^ - 1}$, applying \eqref{eq:cond_bound}. Observe that the bound is constant; in fact, $\kappa(G_c) = \frac{\gamma^2}{\gamma^ - 1}$, $\forall \ N$ and as $N\rightarrow \infty$, in directed path networks.

 \textbf{Directed ring} networks with edge orientation as shown in Fig. \ref{fig:dir-ring-graph} have squared singular values given as $\sigma_i^2 = 1 $, $\forall i\in\mathcal{N}$. From \eqref{eq:cond_bound} we can get the bound $\kappa(G_c) \leq 1$, which is constant and in fact binding: $\kappa(G_c) = 1, \forall \ N$ and as $N\rightarrow \infty$.

\section{Controllability \& Fraction of Leaders in Directed Canonical Structures}\label{sec:ratio-of-leaders}

Thus far, in analyzing the effect of increasing size on network controllability, we have assumed that all nodes are injected with an input, so that $B=I$. In this section, we study how the fraction and spread of leaders in the three directed structures (depicted in Fig. \ref{fig:graphs-directed}), affect their controllability properties. Thence, rather than set the input matrix $B=I$, we inject the inputs only into a subset of nodes, dubbed leaders. While condition number has been effective in investigating the effects of increasing size in Section~\ref{sec:main-result}, here we adopt minimum required energy in the worst case captured by $1/\lambda_{N}(G_c)$ as the measure of interest for investigating the role of leader nodes. 

To begin, consider the cases of the star and path network in Figs. \ref{fig:dir-star} and \ref{fig:dir-path-graph} with their respective $N \times N$ adjacencies $A_s$ and $A_p$ given by 
\begin{align}
A_s = \left[\begin{array}{cccc}
0 & 0 & \cdots & 0\\
1 & 0 & \cdots & 0\\
\vdots & \vdots & \ddots & \vdots\\
1 & 0 & \cdots & 0
\end{array}\right], A_p = \left[ \begin{array}{cccccc}
    0                &  0       &   0          &    \cdots             &      0     \\
    1                &  0      &              &         \iddots        &   \vdots  \\
    0                &  \ddots  &   \ddots           &           &   0        \\
    \vdots           &  \ddots  &   \ddots     &    0            &   0       \\
    0                &  \dots   &    0         &    1            &   0
 \end{array} \right].
\end{align} We can now replace the scaled adjacencies $\frac{1}{\gamma}A_s$ and $\frac{1}{\gamma}A_p$ in \eqref{eq:L} and with $B$ given per Assumption~\ref{assump:b-equal-i}, we can solve for the corresponding Gramians $G_c^s$ and $G_c^p$ as follows. 

\textbf{Directed star} networks have a controllability Gramian $G_c^{s}$ whose entries are given by $[G_c^{s}]_{11} = [B]_{11}$, $[G_c^{s}]_{ii} = \gamma^{-2}[G_c^{s}]_{11} + [B]_{ii}, \forall i > 1$, $[G_c^{s}]_{j1} = [G_c^{s}]_{1j} = [B]_{1j} = 0$, $\forall j > 1$, and  $[G_c^{s}]_{ij} = \gamma^{-2}[G_c^{s}]_{11}$, $\forall j > 1, j \neq i$. In particular, all entries on the first row of the Gramian are zero except for the $1,1$ entry which is equal to $[B]_{11}$. Hence, in order for a star topology to be controllable the designer should always select the first (central) node as a leader. Further calculation of the eigenvalues indicate that we alway need to select all but one peripheral node of the star network to order to have a full rank Gramian or a controllable network, $\lambda_N(G_c^s)  > 0$. Therefore, \emph{ there is no fraction $f<1$ of nodes that can be chosen to ensure controllability of a star network, as $N \to \infty$}. This behavior is in sharp contrast with the directed path and cycle topologies analyzed next. In the latter cases, although no finite collection of leaders can ensure controllability as $N\to \infty$, the designer can still select an asymptotically non-vanishing fraction of nodes as leaders and obtain a controllable ring or path network for all $N$ and as $N \to \infty$.

\textbf{Directed path} networks have a diagonal controllability Gramian $G_c^{p}$ whose diagonal entries are give by $[G^p_c]_{ii} = \sum_{k=1}^{i}\gamma^{2(k-i)}[B]_{kk}, \forall i$. The eigenvalues of $G_c^{p}$ are the same as its diagonal entries, and the designer would again need to select the first (root) node if the system is to be controllable. However, with just the root node as the leader $\lambda_N(G_c^p) = \gamma^{2(1-N)} \to 0$ as $N \to \infty$ so that injecting the input just at the first node cannot ensure the controllability of an infinite integrator chain with the increasing length. Indeed, with any finite collection of leaders it follows that the distance to the leader nodes grows for the nodes further through the chain and the minimum eigenvalue of the Gramian would approach zero geometrically fast as $N \to \infty$. On the other hand, by selecting a non-zero fraction $f$ of nodes as leaders and spreading them uniformly throughout the chain one can ensure a  distance of at most $1/f$ to the closest leader for every node in the chain and the above calculation of the Gramian would then imply a lower bound of $\lambda_N(G_c^p)\geq \gamma^{-2/f}$, which holds even as $N \to \infty$. \emph{By selecting a non-zero fraction of nodes and spreading them uniformly throughout the network, the designer can ensure the controllability of an infinite integrator chain.} The numerical experiments in what follows indicate that this observation applies also to the case of networks with directed ring topology.

\textbf{Directed ring} networks demonstrate an asymptotic behavior that resembles that of the path networks as $N \to \infty$. Here, we investigate the effect of the fraction of leaders on the least eigenvalue of the Gramian in a ring network of $800$ nodes. To this end, we first divide the nodes into consecutive blocks of a fixed length and with varying number of leader nodes at each block. We next consider the effect of varying the block length by fixing only one leader at each block and increasing the block length. The two experiments in Fig.~\ref{fig:ratio-of-leaders-ring} indicate although the worst case least control effort decreases with the increasing fraction of leader nodes, when a single leader is fixed at each block better control can be achieved with a smaller fraction of leaders, since the leaders are better spread throughout the the network. Indeed, in the extreme case where all the leaders are clustered together then no fraction $f < 1 $ of leaders can ensure controllability as $N$ increases. This can be attributed to the fact that even though  the number of leader nodes increases with the network size, when all the leaders are clustered together and not spread through the network there will always be some nodes in the network that get arbitrarily far from all the leader as the network size increases.

\begin{figure}[h*]
\centering
\includegraphics[width=0.8\linewidth]{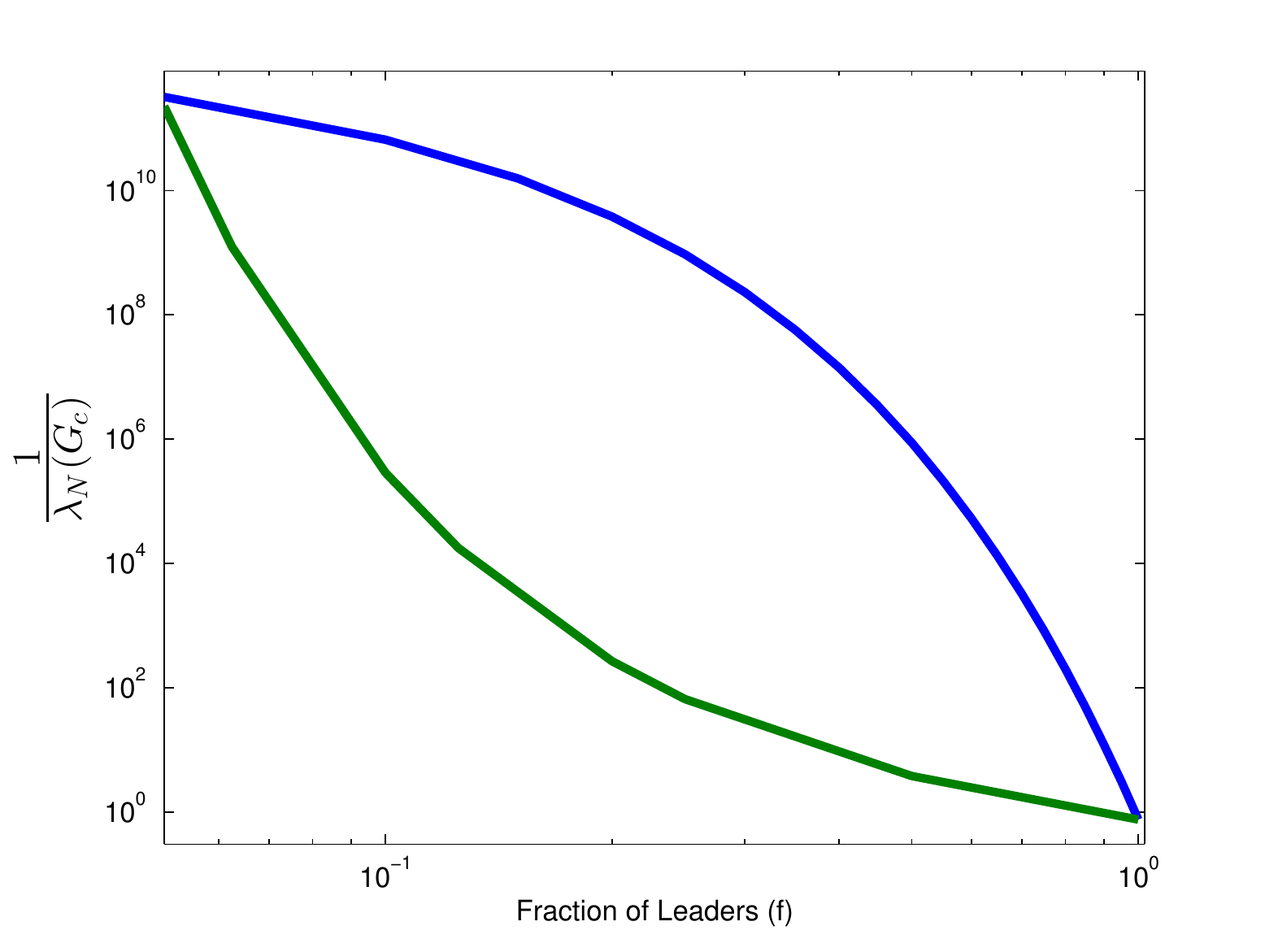}
\caption{Effect of the fraction and spread of leaders for an $800$-node directed cycle network: putting a single leader in each block and varying the block lengths over the first $11$ divisors of $800$ for the green curve; and fixing block length at $20$ and increasing the number of nodes at each block for blue curve.}
\label{fig:ratio-of-leaders-ring}
\end{figure}


\section{Conclusions}\label{sec:conc}

In this paper, we investigated the controllability of a linear single integrator network as the number of nodes increases. We first injected input signals at every node and required the controllability Gramian to remain well-conditioned even as the network size increases. Accordingly, with a proper normalization that is uniform in the size of the network, the Gramian condition number for graphs with a bounded maximum degrees was shown to remain bounded, uniformly in the size. The results provide theoretical insights on the challenges of controllability for large networks in general, and highlights the role of bounded degrees in particular. Furthermore, we proffered bounds on the condition number of the controllability Gramian, which in the cases of cycle, path or star topologies were expressible in terms of the network size and could guarantee numerical stability with the increasing dimension. We next shifted our attention to the question of choice and number of leader nodes for large networks, and showed that while a star topology can never be made controllable for all $N$ by selecting any fixed fraction $f<1$ of nodes as leaders; in the cases of path and ring networks, by selecting a non-zero fraction of nodes as leaders and having them spread across the network such that no nodes gets arbitrarily far from all leaders, the designer can ensure that the minimum eigenvalue of the Gramian is bounded away from zero even as the network size increases. This distinction between the star topology and path or rings with respect to the required asymptotic fraction of leaders for controllability with the increasing size, further highlights the challenges imposed by the high-degree nodes on the controllability of large networks.

\end{document}